\documentclass [a4paper,fceqn,11pt]{article}
\usepackage{amsmath,amssymb,amsthm,latexsym}
\usepackage{graphics,graphicx}
\usepackage{color}

\newtheorem{theorem}{Theorem}[section]

\newtheorem{definition}{Definition}[section]
\newtheorem{remark}{Remark}[section]

\numberwithin{equation} {section}

\textwidth 6.5in
\oddsidemargin -.2in
\textheight 8.7in

\begin{document}
\begin{sloppy}
\title{\bf A Log Probability Weighted Moment Estimator of Extreme Quantiles}
%
\author{Frederico Caeiro
\\ {\small Universidade Nova de Lisboa, FCT and CMA}
\vspace{1.5pc}
\\
Dora Prata Gomes
\\ {\small Universidade Nova de Lisboa, FCT and CMA}
} 
\date{\today}
\maketitle

\small
\textbf{Abstract:}
In this paper we consider the semi-parametric estimation of extreme quantiles of a right heavy-tail model. We propose a new Log Probability Weighted Moment estimator for extreme quantiles, which is obtained from the estimators of the shape and scale parameters of the tail. Under a second-order regular variation condition on the tail, of the underlying distribution function, we deduce the non degenerate asymptotic behaviour of the estimators under study and present an asymptotic comparison at their optimal levels.  In addition, the performance of the estimators is illustrated through an application to real data.

\noindent
\normalsize

\section{Introduction}%
\label{caeirogomes_sec:1}%
Let us consider a set of $n$ independent and identically distributed (i.i.d.), or possibly weakly dependent and stationary random variables (r.v.’s), $X_1, X_2, \ldots, X_n$, with common distribution function (d.f.) $F$.  We shall assume that $\overline{F}:=1-F$ has a Pareto-type right tail, i.e., with the notation $g(x) \sim h(x)$ if and only if $g(x)/h(x) \rightarrow 1$,
as $x\rightarrow \infty$,
\begin{equation}
\label{caeirogomes_eq_paretotail}%
\overline{F}(x)\ \sim\  (x/C)^{-1/\gamma}, \qquad x\rightarrow \infty,
\end{equation}
with $\gamma>0$ and $C>0$ denoting the shape and scale parameters, respectively. 
Then the quantile function $U(t):=F^{\leftarrow}(1-1/t)=\inf\{x:F(x)\geq 1-1/t\}$,\quad $t>1$ is a regularly varying function with a positive index of regular variation equal to $\gamma$, i.e.,
\begin{equation}
\label{caeirogomes_firstorder}%
\lim_{t\rightarrow\infty}
\frac{U(tx)}{U(t)}
=x^{\gamma}\ .
\end{equation}
Consequentially, we are in the max-domain of attraction of the Extreme Value distribution
\begin{equation}
\label{caeirogomes_EV}%
EV_{\gamma}(x) :=\left\{
  \begin{array}{lll}
    \exp(-(1+\gamma x)^{-1/\gamma}),\: & 1+\gamma x > 0 \quad &\mbox{if} \quad \gamma \neq 0 \\
    \exp(-\exp(-x)), & x \in \mathcal{R} \quad &\mbox{if} \quad \gamma =0.
  \end{array}
 \right.
\end{equation}
and denote this by $F \in \mathcal{D}_M (EV_\gamma)$. The parameter $\gamma$  is called the extreme value  index (EVI), the primary parameter in Statistics of Extremes.

Suppose that we are interested in the estimation of a extreme quantile $q_{p}$, a extreme value exceeded with probability  $p=p_n\rightarrow 0$, small. Since $q_{p}=F^{\leftarrow}(1-p)\ \sim\  Cp^{-\gamma}$,\quad $p\rightarrow 0$, for any heavy tailed model under \eqref{caeirogomes_eq_paretotail}, we will also need to deal with the estimation of the shape and scale parameters $\gamma$ and $C$, respectively. Let $X_{n-k:n}\leq \ldots \leq X_{n-1:n}\leq X_{n:n}$ denote the sample of the $k+1$ largest order statistics (o.s.) of the sample of size $n$, where $X_{n-k:n}$ is a intermediate o.s., i.e., $k$ is a sequence of integers between $1$ and $n$ such that
\begin{equation}
\label{caeirogomes_intermediate}%
k\rightarrow \infty \quad\mbox{and}\quad  k/n \rightarrow 0,\quad \mbox{as}\quad \ n \rightarrow \infty.
\end{equation}
The classic semi-parametric estimators of the parameters $\gamma$ and $C$, introduced in Hill (1975),  
are 
\begin{equation}
\label{caeirogomes_hill}%
\hat\gamma^H_{k,n} :=\frac{1}{k}\sum_{i=1}^k \left(\ln
X_{n-i+1:n}- \ln
X_{n-k:n}\right),
\quad k=1,2,\ldots,n-1,
\end{equation}
and
\begin{equation}
\label{caeirogomes_C}%
\hat C^{H}_{k,n} :=X_{n-k:n}\ \left(\frac{k}{n}\right)^{\hat\gamma^H_{k,n}} ,
\quad k=1,2,\ldots,n-1,
\end{equation}
respectively. The EVI estimator in \eqref{caeirogomes_hill} is the well know Hill estimator, the average of the log excesses over the high threshold $X_{n-k:n}$.
The classic semi-parametric extreme quantile estimator is the Weissman-Hill estimator (Weissman, 1978) with functional expression
\begin{equation}
\label{caeirogomes_estimador_wh}%
 \hat W^{^{H}}_{k,n}(p)  := X_{n-k:n}\ \Big(\frac{k}{np}\Big)^{\hat\gamma^H_{k,n}}, \quad k=1,2,\ldots,n-1.
\end{equation}

Most classical semi-parametric estimators of parameters of the right tail usually exhibit the same type of behaviour, illustrated in Figure \ref{caeirogomes_fig_biasvariance}: we have a high variance for high thresholds $X_{n-k:n}$, i.e., for small values of $k$ and high bias for low thresholds, i.e., for large values of $k$. Consequently, the mean squared error (MSE) has a very peaked pattern, making it difficult to determine the optimal $k$, defined as the value $k_0$ where the MSE is minimal. For a detailed review on the subject see for instance Gomes \emph{et al.} (2008) and Beirlant \emph{et al.} (2012).
\begin{figure}[h!]
\begin{center}
\includegraphics[width=.5\textwidth]{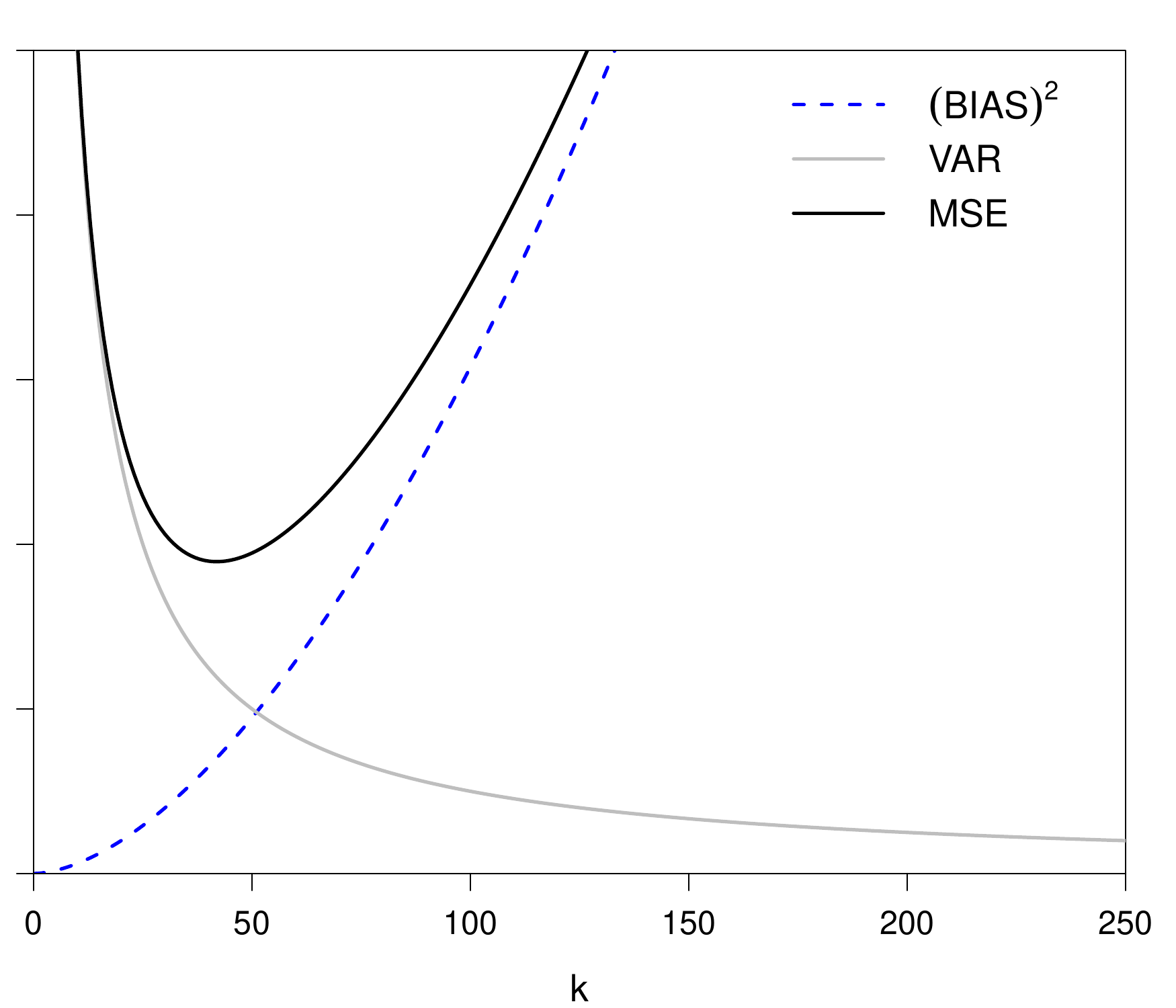}%
\end{center}
%
%
\caption{
Illustration of the Asymptotic Squared Bias, Variance and Mean squared error patterns, as function of $k$, of most classical semi-parametric estimators, for a sample of size $n = 250$.\vspace{10pt}}
\label{caeirogomes_fig_biasvariance} %
\end{figure}

Apart from the classical EVI, scale and extreme quantile estimators in \eqref{caeirogomes_hill}, (\ref{caeirogomes_C}) and (\ref{caeirogomes_estimador_wh}), respectively, we shall introduce in Section 2 the corresponding Log Pareto Probability Weighted Moment estimators. In Section \ref{caeirogomes_sec3}, we derive their non degenerate asymptotic behaviour and present an asymptotic comparison of the estimators under study at their optimal levels.


\medskip%
\section{Pareto Log Probability Weighted Moment Estimators}
The probability weighted moments (PWM) method, introduced in Greenwood \textit{et al.} (1979) 
 is a generalization of the method of moments. The PWM of a r.v. $X$,  are defined by $M_{p,r,s}:= E(X^p(F(X))^r(1-F(X))^s)$,
with $p,\ r,\ s\ \in \mathcal{R}$.  When $r=s=0$, $M_{p,0,0}$ are the usual non-central moments of order $p$. Hosking \textit{et al.} (1987) 
advise the use of  $M_{1,r,s}$ because the relation between parameters and moments is usually simpler than for the non-central moments. Also, if $r$ and $s$ are positive integers, $F^r(1-F)^s$ can be written as a linear combination of powers of $F$ or $1-F$ and usually work with one of the moments $a_r:=M_{1,0,r}=E(X(1-F(X))^r)$ or $b_r:=M_{1,r,0}=E(X(F(X))^r)$. Given a sample size $n$, the unbiased estimators of $a_r$ and $b_r$ are, respectively,
\[
\hat a_r=\frac{1}{n}\sum_{i=1}^{n-r}\frac{\binom{n-i}{r}}{\binom{n-1}{r}}X_{i:n},\quad
\text{and}\quad
\hat b_r=\frac{1}{n}\sum_{i=r+1}^n\frac{\binom{i-1}{r}}{\binom{n-1}{r}}X_{i:n}.
\]
The first semi-parametric Pareto PWM (PPWM) estimators for heavy tailed models appeared in Caeiro and Gomes (2011a), for the estimation of the shape and scale parameters $\gamma$ and $C$, and in Caeiro \emph{et al.} (2012), for the estimation of extreme quantiles and tail probabilities. Since all those PPWM estimators use the sample mean, they are only consistent if $0<\gamma<1$. Caeiro and Gomes (2013) generalized the estimators in Caeiro and Gomes (2011a) with a class of PPWM estimators, consistent for\, $0<\gamma<1/r$\, with\, $r>0$. In order to remove the right-bounded support of the previous PPWM estimators and have consistent estimators for every $\gamma>0$, we shall next introduce new semi-parametric estimators based on the log-moments
$$l_{r}:= E((\ln X)(1-F(X))^r).$$
For non-negative integer $r$, the unbiased estimator of $l_{r}$ is given by
\[
\hat l_r=\frac{1}{n}\sum_{i=1}^{n-r}\frac{\binom{n-i}{r}}{\binom{n-1}{r}}\ln X_{i:n}.
\]
 For the strict Pareto model with d.f. $F(x)=1-(x/C)^{-1/\gamma}$, $x>C>0$, $\gamma>0$ the Pareto log PWM (PLPWM) are\ $l_{r}={\ln (C)}/{(1+r)}+{\gamma}/{(1+r)^2}$.
 
To obtain the tail parameters estimators of $\gamma$ and $C$ of a underlying model with d.f. under \eqref{caeirogomes_eq_paretotail}, we need the followings results:
\begin{itemize}
\item $\frac{X_{n-k:n}}{C(n/k)^\gamma}$ converges in probability to 1, 
for intermediate $k$;
\item the conditional distribution of $X|X>X_{n-k:n}$, is approximately Pareto with shape parameter $\gamma$ and scale parameter $C(n/k)^\gamma$.
\end{itemize} 
The PLPWM estimators of $\gamma$ and $C$, based on the $k$ largest observations, are
\begin{equation}
\label{caeirogomes_eq_PLPWM_EVI}%
\hat \gamma^{^{PLPWM}}_{k,n} :=\frac{1}{k}\sum_{i=1}^k \left(2-4\frac{i-1}{k-1}\right)\ln
X_{n-i+1:n}
,\quad k=2,\ldots,n,
\end{equation}
and
\begin{equation}
\hat C^{^{PLPWM}}_{k,n}:=\Big(\frac{k}{n}\Big)^{\hat\gamma^{^{PLPWM}}_{k,n}}\exp\left\{D_{k,n}\right\},\quad k=2,\ldots,n,
\end{equation}
with  $D_{k,n}:=\frac{1}{k}\sum_{i=1}^k \left(4\frac{i-1}{k-1}-1\right)\ln
X_{n-i+1:n}$. Notice that $\hat \gamma^{^{PLPWM}}_{k,n}$ is a weighted average of the $k$ largest observations, with the weights $g_{i,k}:=(2-4\frac{i-1}{k-1})$. Since $g_{i,k}=-g_{k-i+1,k}$, the weights are antisymmetric and their sum is zero.  
On the basis of the limit relation $q_{p} \sim  Cp^{-\gamma}$,\, $p\rightarrow 0$, we shall also consider the following quantile estimator
\begin{equation}
\label{caeirogomes_estimador_qPLPWM}%
\hat Q^{^{PLPWM}}_{k,n}(p):=\Big(\frac{k}{np}\Big)^{\hat\gamma^{^{PLPWM}}_{k,n}}\exp\left\{D_{k,n}\right\},\quad k=2,\ldots,n,
\end{equation}
valid for $\gamma>0$.

\medskip%
\section{Asymptotic Results}%
\label{caeirogomes_sec3}%
\subsection{Non Degenerate Limiting Distribution}%
In this section we derive several basic asymptotic results for the EVI estimators in \eqref{caeirogomes_hill} and \eqref{caeirogomes_eq_PLPWM_EVI} and for the quantiles estimators, $\hat W^{^{H}}_{k,n}(p)$ and $\hat Q^{^{PLPWM}}_{k,n}(p)$. Asymptotic results for the scale $C$-estimators are not presented but can be obtained with an analogous proof.

To ensure the consistency of the EVI semi-parametric estimators, for all $\gamma>0$, we need to assume that $k$ is an intermediate sequence of integers, verifying \eqref{caeirogomes_intermediate}.
To study the asymptotic behaviour of the estimators, we need a second order regular variation 
condition with a parameter $\rho\leq 0$ that measures the rate of convergence of  $U(tx)/U(t)$ to $x^\gamma$ in \eqref{caeirogomes_firstorder} and is given by
\begin{equation}
\label{caeirogomes_secondorder}%
\lim_{t\rightarrow\infty} \frac{\ln U(tx)-\ln U(t)-\gamma\ln x}{A(t)}=\frac{x^{\rho}-1}{\rho}
\Leftrightarrow\ \lim_{t\rightarrow\infty}\frac{\frac{U(tx)}{U(t)}-x^\gamma}{A(t)}=x^\gamma \frac{x^\rho-1}{\rho},
\end{equation}
for all $x > 0$, with $|A|$ a regular varying function with index $\rho$ and $\frac{x^\rho-1}{\rho}=\ln x$ if $\rho=0$. 
\begin{theorem}
\label{caeiro_gomes_t1}%
Under the second order framework, in (\ref{caeirogomes_secondorder}), and for intermediate $k$,  i.e, whenever (\ref{caeirogomes_intermediate}) holds, the asymptotic distributional representation of   $\hat\gamma^\bullet_{k,n}$, with $\bullet$  denoting either $H$ or $PLPWM$, is given by
\begin{equation}
\label{caeirogomes_eq_EVI_dist}%
\hat\gamma^\bullet_{k,n}\ \stackrel{d}{=}\ \gamma+\frac{\sigma_{_\bullet} Z_k^\bullet}{\sqrt{k}}+ b_{_\bullet} {A(n/k)}(1+o_p(1)),
\end{equation}
where $\overset{d}{=}$\ denotes equality in distribution, $Z_k^\bullet$ is a standard normal r.v.,
\[
b_{_H}=\frac{1}{1-\rho},\quad b_{_{PLPWM}}=\frac{2}{(1-\rho)(2-\rho)},\quad \sigma_{_H}=\gamma\quad and\quad \sigma_{_{PLPWM}}=\frac{2}{\sqrt{3}}\gamma.
\]
If we choose the intermediate level $k$ such that $\sqrt{k}\, A(n/k) \rightarrow\lambda\in \mathcal{R}$, then, 
\[
\sqrt{k}(\hat\gamma^\bullet_{k,n}-\gamma)\overset{d}{\rightarrow}N(\lambda\, b_{_\bullet}, \sigma_{_\bullet}^2).
\] 
\end{theorem}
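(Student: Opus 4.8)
The plan is to reduce everything to a known asymptotic expansion for the log-spacings of the top order statistics and then linearize. First I would invoke the classical Rényi-type representation: conditionally on $X_{n-k:n}$, the excesses $\ln X_{n-i+1:n} - \ln X_{n-k:n}$, $i=1,\dots,k$, behave like the order statistics of a standard exponential sample scaled by $\gamma$, up to a second-order correction governed by $A(n/k)$. More precisely, under \eqref{caeirogomes_secondorder} one has the well-known uniform expansion (see e.g.\ de Haan and Ferreira, or Gomes \emph{et al.} 2008) for the tail empirical process of the log-excesses, which yields, for the Hill statistic,
\[
\hat\gamma^H_{k,n}\ \stackrel{d}{=}\ \gamma + \frac{\gamma}{\sqrt{k}}\,Z_k^H + \frac{A(n/k)}{1-\rho}\,(1+o_p(1)),
\]
the classical result; this fixes $b_{_H}=1/(1-\rho)$ and $\sigma_{_H}=\gamma$ and serves as the template for the $PLPWM$ case.

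Second, for $\hat\gamma^{^{PLPWM}}_{k,n}$ I would write it using the antisymmetric weights $g_{i,k}=2-4\frac{i-1}{k-1}$ as a linear functional of the $\ln X_{n-i+1:n}$. Because $\sum_{i=1}^k g_{i,k}=0$, the deterministic term $\ln X_{n-k:n}$ drops out and the estimator is in fact a weighted sum of the log-excesses $\ln X_{n-i+1:n}-\ln X_{n-k:n}$. Substituting the same uniform second-order representation for these log-excesses, I would split the result into three pieces: (i) the main term, a weighted sum of (scaled) exponential order statistics, whose expectation gives $\gamma$ — here one uses the continuous analogue $\int_0^1 (2-4u)(-\ln u)\,du = \gamma$ after the obvious change of variable, i.e.\ the PLPWM identity $l_r=\ln C/(1+r)+\gamma/(1+r)^2$ localized to the tail; (ii) the centered stochastic term, whose asymptotic variance is $\gamma^2\int_0^1\int_0^1 (2-4u)(2-4v)(u\wedge v - uv)/(uv)\,du\,dv$ — this integral should evaluate to $4/3$, giving $\sigma_{_{PLPWM}}=\frac{2}{\sqrt3}\gamma$, and asymptotic normality of this term follows from a CLT for linear combinations of uniform order statistics (or equivalently from weak convergence of the tail empirical process integrated against the weight function $2-4u$); (iii) the bias term, $A(n/k)\int_0^1 (2-4u)\,\frac{u^{-\rho}-1}{\rho}\,du\,(1+o_p(1))$, which should evaluate to $\frac{2}{(1-\rho)(2-\rho)}\,A(n/k)$ after elementary integration, pinning down $b_{_{PLPWM}}$.

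Third, once the distributional representation \eqref{caeirogomes_eq_EVI_dist} is established for both choices of $\bullet$, the final convergence statement is immediate: multiply by $\sqrt k$, use $\sqrt k\,A(n/k)\to\lambda$, and note $\sqrt k\cdot o_p(A(n/k))=o_p(\sqrt k\,A(n/k))=o_p(1)$, so $\sqrt k(\hat\gamma^\bullet_{k,n}-\gamma)\overset{d}{\to}N(\lambda b_{_\bullet},\sigma_{_\bullet}^2)$ by Slutsky.

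The main obstacle is step (ii)–(iii) for the $PLPWM$ estimator: one must justify that the weights $g_{i,k}$ applied to the finite-sample log-excesses can be replaced, uniformly and with negligible error, by the integral functionals against the weight $2-4u$ on $[0,1]$, controlling both the contribution of the very top order statistics (small $u$, where $-\ln u$ and $(u^{-\rho}-1)/\rho$ blow up) and the remainder in the second-order expansion. This is exactly the kind of uniform inequality handled by the Cramér–von Mises–type bounds on the weighted tail empirical/quantile process under \eqref{caeirogomes_secondorder}; I would cite the relevant version (e.g.\ from de Haan and Ferreira, Thm.\ 2.4.8, or Caeiro and Gomes 2011a for the PWM analogue) rather than reproving it, and then the bias and variance constants follow from the two elementary integrals above.
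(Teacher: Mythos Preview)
Your proposal is correct and follows essentially the same route as the paper: cite the Hill case, use the antisymmetry $\sum_i g_{i,k}=0$ to rewrite $\hat\gamma^{^{PLPWM}}_{k,n}$ as a weighted sum of log-excesses, insert the second-order expansion, and read off the mean, variance and bias constants. The only cosmetic difference is that the paper works directly with the exact representation $X\overset{d}{=}U(Y)$ and the discrete sums $\frac{1}{k}\sum_i g_{i,k}E_{k-i+1:k}$ and $\frac{1}{k}\sum_i g_{i,k}(Y_{k-i+1:k}^\rho-1)/\rho$, invoking asymptotics for linear functions of order statistics, whereas you pass to the limiting integrals $\int_0^1(2-4u)(-\ln u)\,du$, $\int_0^1(2-4u)(u^{-\rho}-1)/\rho\,du$ and the corresponding covariance integral via the tail empirical process; the computations and the constants are identical (note the first integral equals $1$, not $\gamma$---it is $\gamma$ times this integral that yields the centering).
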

\begin{proof}
For the Hill estimator, the proof can be found in de Haan and Peng (1998). 
For the PLPWM EVI-estimator, note that $\sum_{i=1}^k \left(2-4\frac{i-1}{k-1}\right)=0$ and consequently 
\[
\hat \gamma^{^{PLPWM}}_{k,n} =\frac{1}{k}\sum_{i=1}^k \left(2-4\frac{i-1}{k-1}\right)\ln\frac{
X_{n-i+1:n}}{X_{n-k:n}}=%
\frac{1}{k}\sum_{i=1}^k g_{i,k}\ln\frac{
X_{n-i+1:n}}{X_{n-k:n}}, \quad k<n.
\]
We can write $X\overset{d}{=}U(Y)$ where $Y$ is a standard Pareto r.v., with d.f. $F_Y(y)=1-1/y$, $y>1$. Consequently and provided that $k$ is intermediate, we can apply equation \eqref{caeirogomes_secondorder}  with $t=Y_{n-k:n}$ and $x=Y_{n-i+1:n}/Y_{n-k:n}\overset{d}{=}Y_{k-i+1:k}$, $1\leq i\leq k$, to obtain
\[
\ln\frac{X_{n-i+1:n}}{X_{n-k:n}}%
\stackrel{d}{=}%
\gamma\ln Y_{k-i+1:k}+\frac{Y_{k-i+1:k}^\rho-1}{\rho}A(Y_{n-k:n})(1+o_p(1)).
\]
Then, since $nY_{n-k:n}/k\overset{p}{\rightarrow} 1$, as $n\rightarrow\infty$,
\[
\hat \gamma^{^{PLPWM}}_{k,n}\overset{d}{=}\frac{1}{k}\sum_{i=1}^k g_{i,k}
\left\{\gamma E_{k-i+1:k}+\frac{Y_{k-i+1:k}^\rho-1}{\rho}A(n/k)(1+o_p(1))\right\},
\] 
where $\{E_i\}_{i\geq 1}$, denotes a sequence of i.i.d. standard exponential r.v.'s. The distributional representation of the EVI-estimator $\hat\gamma^{^{PLPWM}}_{k,n}$ follows  from the results for linear functions of ordinal statistics (David and Nagaraja, 2003), i.e.,  
$
Z_k^{^{PLPWM}}=\frac{\sqrt{k}}{\sigma_{PLPWM}}\frac{1}{k}\sum_{i=1}^k (g_{i,k}E_{k-i+1:k}-1)
$
is a standard normal r.v. and
$\frac{1}{k}\sum_{i=1}^k g_{i,k}
\frac{Y_{k-i+1:k}^\rho-1}{\rho}$ converges in probability towards $\frac{2}{(1-\rho)(2-\rho)}$, as $k \rightarrow \infty$.\\
The asymptotic normality of $\sqrt{k}(\hat\gamma^\bullet_{k,n}-\gamma)$ follows straightforward from the  representation in distribution in \eqref{caeirogomes_eq_EVI_dist}.
\end{proof}

\begin{remark}
Notice that $\hat \gamma^{^{PLPWM}}_{k,n}$ has a smaller asymptotic bias, but a larger asymptotic variance than $\hat \gamma^{^{H}}_{k,n}$. A more precise comparison of the EVI-estimators will be dealt in Section \ref{caeirogomes_sec_asymp_comp}.
\end{remark}

\begin{remark}
For intermediate $k$  such that $\sqrt k\, A(n/k) \to \lambda$, finite, as $n \to \infty$, the Asymptotic Mean Squared Error (AMSE) of any semi-parametric EVI-estimator, with asymptotic distributional representation given by \eqref{caeirogomes_eq_EVI_dist}, is
$$
AMSE(\widehat{\gamma}_{n, k}^{\bullet}) := \frac{\sigma_{\bullet}^2}{k} + b_{\bullet}^2 A^2(n/k),
$$
where $Bias_{\infty}(\widehat{\gamma}_{n, k}^{\bullet}) :=b_{\bullet} A(n/k)$ and $Var_{\infty}(\widehat{\gamma}_{n, k}^{\bullet}) :=\sigma_{\bullet}^2 /k$. Let $k_0^{\bullet}$ denote the level $k$, such that $AMSE(\widehat{\gamma}_{n, k}^{\bullet})$ is minimal, i.e., $k_0^{\bullet}\equiv k_0^{\bullet}(n):=\arg \min_k AMSE(\widehat{\gamma}_{n, k}^{\bullet})$.
If $A(t)=\gamma\beta t^\rho$, \, $\beta\neq 0$, $\rho<0$ which holds for most common heavy tailed models, like the Fr\'{e}chet, Burr, Generalized Pareto or Student's t, the optimal $k$-value for the EVI-estimation through $\hat\gamma_{n, k}^{\bullet}$ is well approximated by
\begin{equation}
\label{caeirogomes_k0}%
k_0^{\bullet}=\left(\frac{\sigma_{\bullet}^2n^{-2\rho}}{(-2\rho)b_{\bullet}^2\gamma^2\beta^2}\right)^{\frac{1}{1-2\rho}}.
\end{equation}
\end{remark}
\begin{remark}
The estimation of the shape second-order parameter $\rho$ can be done using the classes of estimators in Fraga Alves \emph{et al.} (2003), Ciuperca and Mercadier (2010), Goegebeur \emph{et al.} (2010) or Caeiro and Gomes (2012). Consistency of those estimators is achieved for intermediate $k$ such that 
$\sqrt{k}A(n/k)\rightarrow \infty$ as $n\rightarrow\infty$. For the estimation of the scale second-order parameter $\beta$, for models with $A(t)=\gamma\beta t^\rho$, \, $\beta\neq 0$, $\rho<0$, we refer the reader to the estimator in Gomes and Martins (2002). That estimator is consistent for intermediate  $k$ such that $\sqrt{k}A(n/k)\rightarrow \infty$ as $n\rightarrow\infty$ and estimators of $\rho$ such that $\hat\rho-\rho=o_p(1/\ln n)$. Further details on the estimation of ($\rho$,$\beta$) can be found in Caeiro \emph{et al.} (2009).
\end{remark}

For the extreme quantile estimators in \eqref{caeirogomes_estimador_wh} and \eqref{caeirogomes_estimador_qPLPWM}, their asymptotic distributional representations follows from the next, more general, Theorem.
\begin{theorem}
Suppose that 
$\bullet$ denotes any EVI-estimator with distributional representation given by \eqref{caeirogomes_eq_EVI_dist}. Under the conditions of Theorem  \ref{caeiro_gomes_t1}, if  $p=p_n$ is a sequence of probabilities such that $c_n:=k/(np)\ \underset{}{\rightarrow}\ \infty$, $\ln c_n=o(\sqrt{k})$ and $\sqrt{k}A(n/k)\rightarrow \lambda \in \mathcal{R}$, as $n\rightarrow\infty$, then, 
\begin{equation}
\label{caeirogomes_quantil1}%
\frac{\sqrt{k}}{\ln c_n}\left(\frac{\hat Q^{\bullet}_{k,n}(p)}{q_{p}}-1\right)\overset{d}{=}
\frac{\sqrt{k}}{\ln c_n}\left(\frac{\hat W^{\bullet}_{k,n}(p)}{q_{p}}-1\right)\overset{d}{=}
\sqrt{k}\left(\hat \gamma^{\bullet}_{k,n}-\gamma\right)
(1+o_p(1)).
\end{equation}
\end{theorem}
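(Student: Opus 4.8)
The plan is to pass to the log-scale and peel off the extrapolation factor $c_n^{\hat\gamma^\bullet_{k,n}}$. Both $\hat W^\bullet_{k,n}(p)=X_{n-k:n}\,c_n^{\hat\gamma^\bullet_{k,n}}$ and $\hat Q^\bullet_{k,n}(p)=\widehat E^\bullet_{k,n}\,c_n^{\hat\gamma^\bullet_{k,n}}$, with $\widehat E^\bullet_{k,n}:=\hat C^\bullet_{k,n}\,(n/k)^{\hat\gamma^\bullet_{k,n}}$, share this shape; by the very form of $\hat C^H_{k,n}$ and $\hat C^{PLPWM}_{k,n}$ the factor $\widehat E^\bullet_{k,n}$ carries no $\hat\gamma^\bullet_{k,n}$-dependence, being $X_{n-k:n}$ when $\bullet=H$ and $\exp\{D_{k,n}\}$ when $\bullet=PLPWM$, and in all cases estimates $U(n/k)$. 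Since $q_p=U(1/p)=U\!\big(c_n\tfrac nk\big)$, one has the exact identity
\[
\ln\frac{\widehat E^\bullet_{k,n}\,c_n^{\hat\gamma^\bullet_{k,n}}}{q_p}
=\big(\hat\gamma^\bullet_{k,n}-\gamma\big)\ln c_n
+\underbrace{\big(\ln \widehat E^\bullet_{k,n}-\ln U(\tfrac nk)\big)}_{R_1}
-\underbrace{\Big(\ln U\big(c_n\tfrac nk\big)-\ln U\big(\tfrac nk\big)-\gamma\ln c_n\Big)}_{R_2},
\]
so it remains to show that $R_1$ and $R_2$ are $o_p(\ln c_n/\sqrt k)$, i.e.\ negligible against the leading term. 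For $R_1$, using $X_{n-k:n}\overset{d}{=}U(Y_{n-k:n})$ with $\{Y_i\}$ i.i.d.\ standard Pareto and the classical fact $\sqrt k\,\big(\tfrac kn Y_{n-k:n}-1\big)\overset{d}{\to}N(0,1)$, an application of \eqref{caeirogomes_secondorder} with $t=n/k$ and $x=\tfrac kn Y_{n-k:n}\to1$ (so $\tfrac{x^\rho-1}{\rho}\to0$) gives $\ln X_{n-k:n}-\ln U(n/k)=O_p(1/\sqrt k)$; for $\widehat E^{PLPWM}_{k,n}$ one invokes the algebraic identity $D_{k,n}-\ln X_{n-k:n}=\hat\gamma^H_{k,n}-\hat\gamma^{PLPWM}_{k,n}$ (immediate from $\big(2-4\tfrac{i-1}{k-1}\big)+\big(4\tfrac{i-1}{k-1}-1\big)=1$ and $\sum_{i=1}^k g_{i,k}=0$), which is $O_p(1/\sqrt k)$ by Theorem~\ref{caeiro_gomes_t1} together with $\sqrt k\,A(n/k)\to\lambda$. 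Hence $R_1=O_p(1/\sqrt k)=o_p(\ln c_n/\sqrt k)$, since $c_n\to\infty$ forces $\ln c_n\to\infty$.

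The delicate term is $R_2$: since the argument $x=c_n$ of $U$ now diverges, the pointwise relation \eqref{caeirogomes_secondorder} must be upgraded to a uniform (Potter-type) inequality for second-order regular variation, namely for every $\varepsilon,\delta>0$ and all large $t,tx$, $\big|\,\tfrac{\ln U(tx)-\ln U(t)-\gamma\ln x}{A(t)}-\tfrac{x^\rho-1}{\rho}\,\big|\le\varepsilon\,x^{\rho+\delta}$ (see, e.g., de Haan and Ferreira, 2006, Thm.~2.3.9). Taking $t=n/k$, $x=c_n$ and, in the case $\rho<0$, $\delta$ small enough that $\rho+\delta<0$, this yields $R_2=A(n/k)\big(\tfrac{c_n^\rho-1}{\rho}+o(c_n^{\rho+\delta})\big)=O\big(A(n/k)\big)=O(1/\sqrt k)$, again $o_p(\ln c_n/\sqrt k)$; when $\rho=0$ the bracket equals $\ln c_n$, and it is then precisely the hypotheses $\sqrt k\,A(n/k)\to\lambda$ and $\ln c_n=o(\sqrt k)$ that keep $\sqrt k\,R_2/\ln c_n$ under control.

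Collecting the pieces, $\ln\dfrac{\widehat E^\bullet_{k,n}\,c_n^{\hat\gamma^\bullet_{k,n}}}{q_p}=\big(\hat\gamma^\bullet_{k,n}-\gamma\big)\ln c_n\,(1+o_p(1))$, which in particular tends to $0$ in probability, since $\hat\gamma^\bullet_{k,n}-\gamma=O_p(1/\sqrt k)$ by Theorem~\ref{caeiro_gomes_t1} and $\ln c_n=o(\sqrt k)$. Linearising with $e^z-1=z(1+o(1))$ as $z\to0$ converts this into $\dfrac{\widehat E^\bullet_{k,n}\,c_n^{\hat\gamma^\bullet_{k,n}}}{q_p}-1=\big(\hat\gamma^\bullet_{k,n}-\gamma\big)\ln c_n\,(1+o_p(1))$, and multiplying through by $\sqrt k/\ln c_n$ gives \eqref{caeirogomes_quantil1} simultaneously for $\hat W^\bullet_{k,n}(p)$ and $\hat Q^\bullet_{k,n}(p)$. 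The main obstacle is exactly the uniform control of $R_2$: extrapolation pushes the argument of $U$ far beyond the largest observation, so \eqref{caeirogomes_secondorder} cannot be used as stated and its uniform version is required, after which the bound $\ln c_n=o(\sqrt k)$ is what makes both remainders subordinate to $(\hat\gamma^\bullet_{k,n}-\gamma)\ln c_n$.
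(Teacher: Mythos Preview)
Your argument is correct and follows essentially the same route as the paper's: a multiplicative decomposition of the ratio $\widehat{Q}/q_p$ (you take logs, the paper stays on the ratio scale), control of the intermediate order statistic term $X_{n-k:n}/U(n/k)$ via \eqref{caeirogomes_secondorder} with $x\to 1$, control of the extrapolation term $U(c_n n/k)/\{U(n/k)c_n^\gamma\}$ via the uniform (Potter-type) second-order inequality, and linearisation of $c_n^{\hat\gamma-\gamma}$; the paper cites de~Haan and Ferreira, Remark~B.3.15, where you invoke Theorem~2.3.9, to the same effect.

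The one genuine simplification in your write-up is the observation $D_{k,n}-\ln X_{n-k:n}=\tilde D_{k,n}=\hat\gamma^{H}_{k,n}-\hat\gamma^{PLPWM}_{k,n}$, which reduces the analysis of $\exp\{\tilde D_{k,n}\}$ to Theorem~\ref{caeiro_gomes_t1} instead of a direct order-statistics computation as in the paper; this is neater and avoids duplicated work. Note, however, that your discussion of the case $\rho=0$ is loose: there $R_2\sim A(n/k)\ln c_n$, so $\sqrt{k}\,R_2/\ln c_n\to\lambda$, which is bounded but not $o_p(1)$ unless $\lambda=0$. The paper's formula $1+A(n/k)/\rho$ tacitly assumes $\rho<0$ as well, so this is a limitation shared with the original, not a gap specific to your version.
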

\begin{proof}
Since  $q_{p}=U(1/p)$, we can write
\[\frac{\hat W^{\bullet}_{k,n}(p)}{q_{p}}=\frac{X_{n-k:n}}{U(n/k)}.\frac{U(n/k)}{U(nc_n/k)}(c_n)^{\hat \gamma^{\bullet}_{k,n}}.\]
Using the second order framework, in \eqref{caeirogomes_secondorder}, with $t=n/k$ and $x=\frac{k}{n}Y_{n-k:n}$,  results in
$\frac{X_{n-k:n}}{U(n/k)}\overset{d}{=}1+\frac{\gamma}{\sqrt{k}}B_k+o_p(A(n/k))$
where $B_k:=\sqrt{k}\left(\tfrac{k}{n} Y_{n-k:n}-1\right)$ is asymptotically a standard normal random variable. Using the results in de Haan and Ferreira (2006), Remark B.3.15 (p. 397),  
$\left(\frac{U(c_n.n/k)}{U(n/k)c_n^\gamma}\right)^{-1}=1+\frac{A(n/k)}{\rho}(1+o(1))$ follows.  Then, since $(c_n)^{\hat \gamma^{\bullet}_{k,n}-\gamma}\overset{d}{=}1+\ln (c_n)(\hat \gamma^{\bullet}_{k,n}-\gamma)(1+o_p(1))$, we get
\begin{equation*}
\frac{\hat W^{\bullet}_{k,n}(p)}{q_{p}}\overset{d}{=}1+\ln (c_n)(\hat \gamma^{\bullet}_{k,n}-\gamma)(1+o_p(1))
+\frac{\gamma B_k}{\sqrt{k}}+\frac{A(n/k)}{\rho}(1+o_p(1)),
\end{equation*}
and the second equality in \eqref{caeirogomes_quantil1}  follows immediately.

For the other quantile estimator, we can write
\[
\hat Q^{\bullet}_{k,n}(p)=X_{n-k:n}\ \Big(\frac{k}{np}\Big)^{\hat\gamma^{\bullet}_{k,n}}  \exp\{\tilde D_{k,n}\}=\hat W^{\bullet}_{k,n}(p)\exp\{\tilde D_{k,n}\},
\]
with $\tilde D_{k,n}:=\frac{1}{k}\sum_{i=1}^k \left(4\frac{i-1}{k-1}-1\right)\ln\frac{
X_{n-i+1:n}}{X_{n-k:n}}$. Then, since we have 
$$\exp\{\tilde D_{k,n}\}\overset{d}{=}1+\frac{\gamma}{\sqrt{3k}}P_k-\frac{\rho A(n/k)(1+o_p(1))}{(1-\rho)(2-\rho)},$$
with $P_k$ a standard normal r.v., the first equality in \eqref{caeirogomes_quantil1}  follows.
\end{proof}

\medskip%
\subsection{Asymptotic Comparison at Optimal Levels}
\label{caeirogomes_sec_asymp_comp}%
We now proceed to an asymptotic comparison of the PLPWM EVI estimator in \eqref{caeirogomes_eq_PLPWM_EVI} with the Hill estimator in \eqref{caeirogomes_hill} and the PPWM EVI estimator in Caeiro and Gomes (2011a), at their optimal levels. This comparison is done along the lines of de Haan and Peng (1998), Gomes and Martins (2001), Caeiro and Gomes (2011b), among others. Similar results  hold  for the extreme quantile estimators, at their optimal levels, since they have the same asymptotic behaviour as the EVI
estimators, although with a slower convergence rate.

Let $k_0^{\bullet}$ be the optimal level for the estimation of $\gamma$ through $\widehat{\gamma}_{k,n}^{\bullet}$ given by \eqref{caeirogomes_k0}, i.e., the level associated with a minimum asymptotic mean square error, and let us denote $\widehat{\gamma}_{n0}^{\bullet}:=\widehat{\gamma}_{k_0^{\bullet},n }^{\bullet}$, the estimator computed at its optimal level.
Dekkers and de Haan (1993)  proved that, whenever $b_{\bullet} \neq 0$, there exists a function $\varphi(n; \gamma, \rho)$, dependent only on the underlying model, and not on the estimator, such that
\begin{equation}
\displaystyle{\lim_{n \to \infty}} %
\varphi(n; \gamma, \rho)AMSE(\widehat{\gamma}_{n0}^{\bullet}) = 
\left(\sigma_{\bullet}^2 \right)^{-\frac{2\rho}{1-2\rho}}\left(b_{\bullet}^2 \right)^{\frac{1}{1-2\rho}}=: LMSE(\widehat{\gamma}_{n0}^{\bullet}). 
\label{caeirogomes_LMSE}%
\end{equation}

\noindent
It is then sensible to consider the following:
\begin{definition}
Given  two biased estimators $\widehat\gamma_{n,k}^{(1)}$ and $\widehat\gamma_{n,k}^{(2)}$, for which distributional representations of the type  \eqref{caeirogomes_eq_EVI_dist}  hold with constants $(\sigma_1, b_1)$ and $(\sigma_2, b_2)$, $b_1$, $b_2 \neq 0$, respectively, both computed at their optimal levels, $k_0^{(1)}$ and $k_0^{(2)}$, the Asymptotic Root Efficiency  $(AREFF)$ indicator is defined as
\begin{equation}
\label{caeirogomes_areff}%
AREFF_{1|2}   := \sqrt{
{LMSE\left(\widehat{\gamma}_{n0}^{(2)}\right)}/
{ LMSE\left(\widehat{\gamma}_{n0}^{(1)}\right)}}=
\left(\left(\frac{\sigma_2}{\sigma_1}\right)^{-2\rho}
\left|\frac{b_2}{b_1}\right|\right)^{\frac{1}{1-2\rho}},
\end{equation}
with LMSE given in \eqref{caeirogomes_LMSE} and $\widehat{\gamma}_{n0}^{(i)}$ := $\widehat{\gamma}_{k_0^{(i)}, n }^{(i)}$,\ $i=1,2$.
\end{definition}

\begin{remark}
Note that this measure was devised so that the higher the AREFF indicator is, the better the first estimator is.
\end{remark}
\begin{remark}
For the PPWM EVI estimator, in Caeiro and Gomes (2011a), we have
\[
b_{_{PPWM}}\!=\frac{(1-\gamma)(2-\gamma)}{(1-\gamma-\rho)(2-\gamma-\rho)}\quad  \text{and}\quad \! \sigma_{_{PPWM}}\!=\frac{\gamma\sqrt{1-\gamma}(2-\gamma)}{\sqrt{1-2\gamma}\sqrt{3-2\gamma}},\quad 0<\gamma<0.5.
\]
\end{remark}

\medskip%
\noindent
To measure the performance of $\hat \gamma^{^{PLPWM}}_{k,n}$, we have computed the AREFF-indicator, in \eqref{caeirogomes_areff}, as function of the second order parameter $\rho$. In figure \ref{caeirogomes_fig_areff} (left), we present the values of
\begin{equation}
\label{caeirogomes_AREFF_PLPWM-H}%
AREFF_{PLPWM|H}(\rho)=
\left(\left(\frac{3}{4}\right)^{-\rho}
\left(1-\frac{\rho}{2}\right)\right)^{\frac{1}{1-2\rho}},
\end{equation}
as a function of $\rho$.
This indicator has a maximum near $\rho=-0.7$, and we have $AREFF_{PLPWM|H}>1$, if $-3.54<\rho<0$, an important region of $\rho$ values in practical applications. It is also easy to check that $\lim\limits_{\rho\rightarrow -\infty}AREFF_{PLPWM|H}(\rho)=\sqrt{3}/2\approx 0.866$ and $\lim\limits_{\rho\rightarrow 0}AREFF_{PLPWM|H}(\rho)=1$.\\ In figure \ref{caeirogomes_fig_areff} (right) we show a contour plot with the comparative behaviour, at optimal levels, of the PLPWM and PPWM EVI-estimators in an important region of the ($\gamma$,$\rho$)-plane. The grey colour marks the area where $AREFF_{PLPWM|PPWM}>1$. At optimal levels, there is only a small region of the ($\gamma$,$\rho$)-plane where the AREFF indicator is slightly smaller than 1. Also, the $AREFF_{PLPWM|PPWM}$ indicator increases, as $\gamma$ increases and/or $\rho$ decreases.

\begin{figure}[h!]
\center{
\includegraphics[width=.475\textwidth]{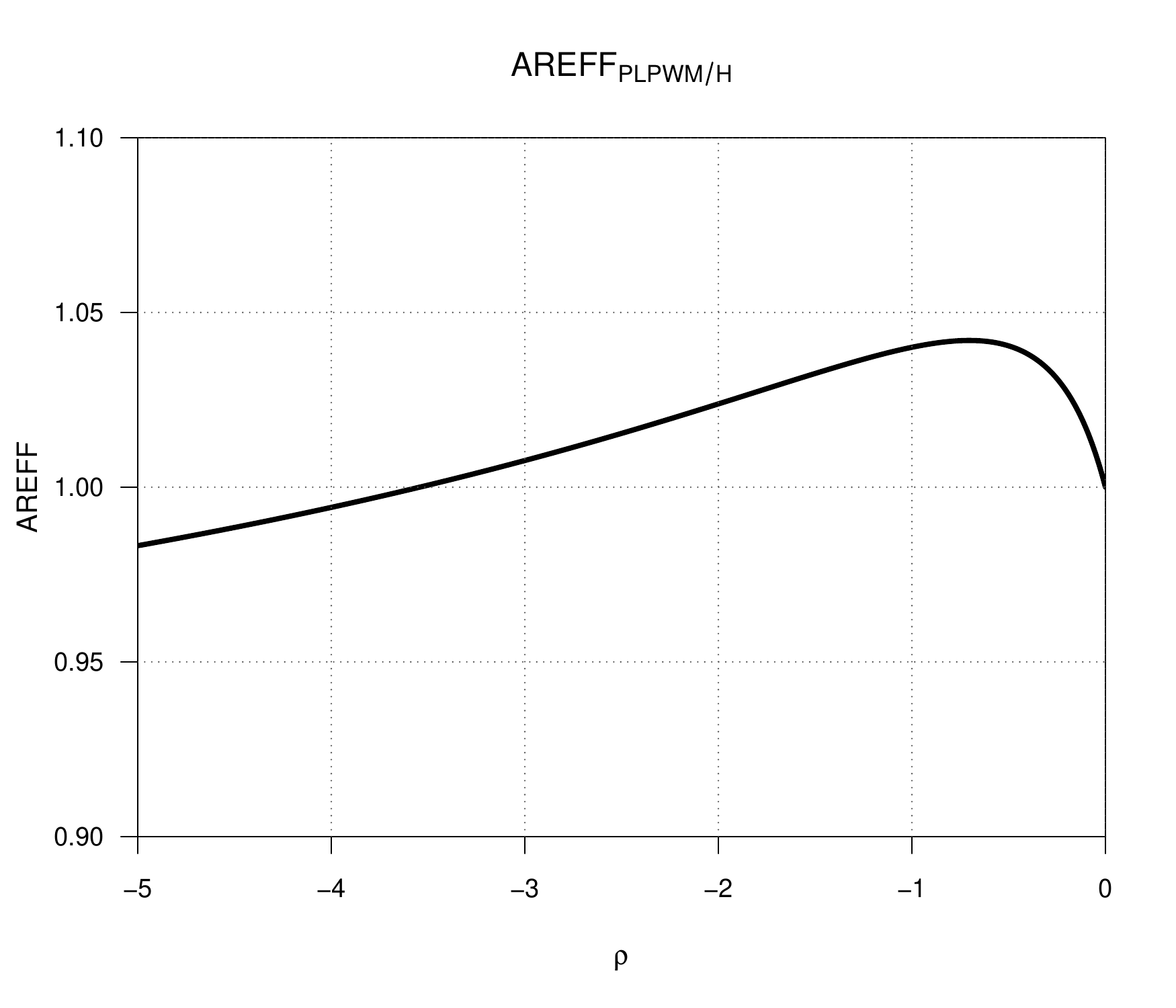}\quad
\includegraphics[width=.475\textwidth]{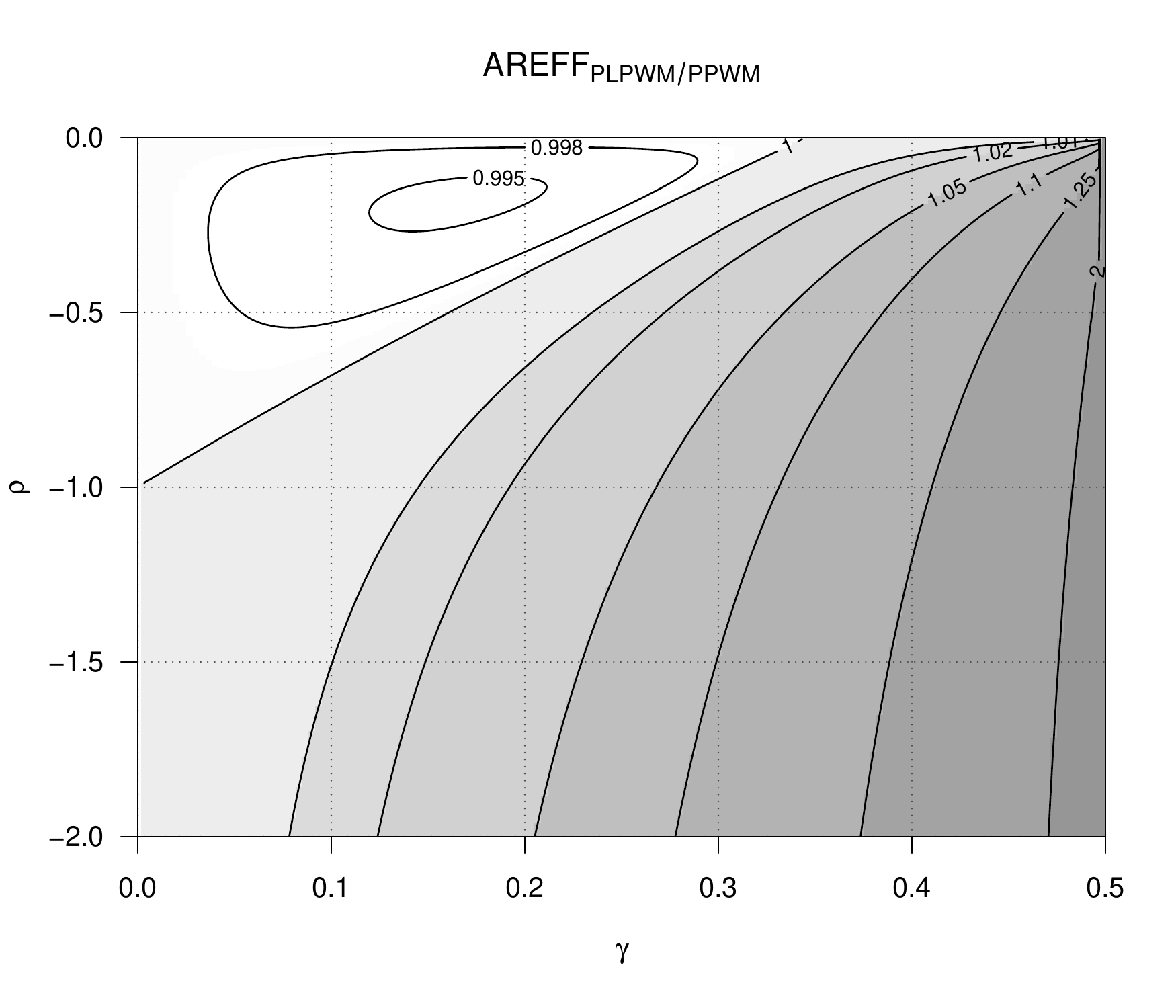}
}%
%
%
\caption{
\textbf{Left:} Plot with the indicator $AREFF_{PLPWM|H}(\rho)$, in \eqref{caeirogomes_AREFF_PLPWM-H}, as a function of $\rho$. \textbf{Right:} Contour plot with the indicator $AREFF_{PLPWM|PWM}$, as a function of ($\gamma$,$\rho$).
}
\label{caeirogomes_fig_areff} %
\end{figure}

\medskip%
\section{A Case Study}
As an  illustration of the performance of the estimators under study,
we shall next consider the analysis of the Secura Belgian Re automobile claim amounts exceeding 1,200,000 Euro, over the period 1988-2001. This data set of size $n=371$ was already studied by several authors  (Beirlant \emph{et al.}, 2004; Beirlant \emph{et al.}, 2008 and Caeiro and Gomes, 2011c). 

\vspace{-2pt}
\begin{figure}[h!]
\center{%
\includegraphics[width=.475\textwidth]{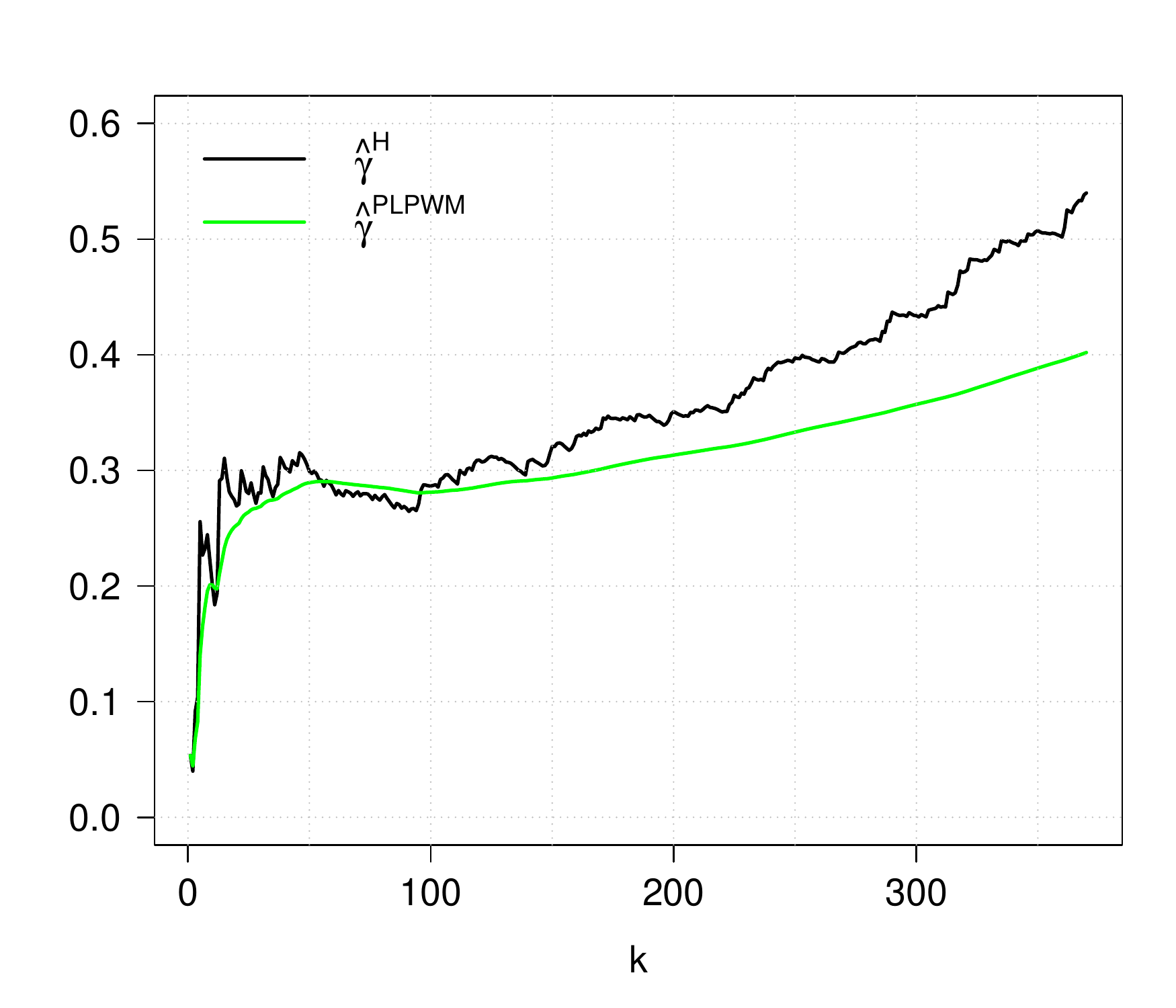}\quad
\includegraphics[width=.475\textwidth]{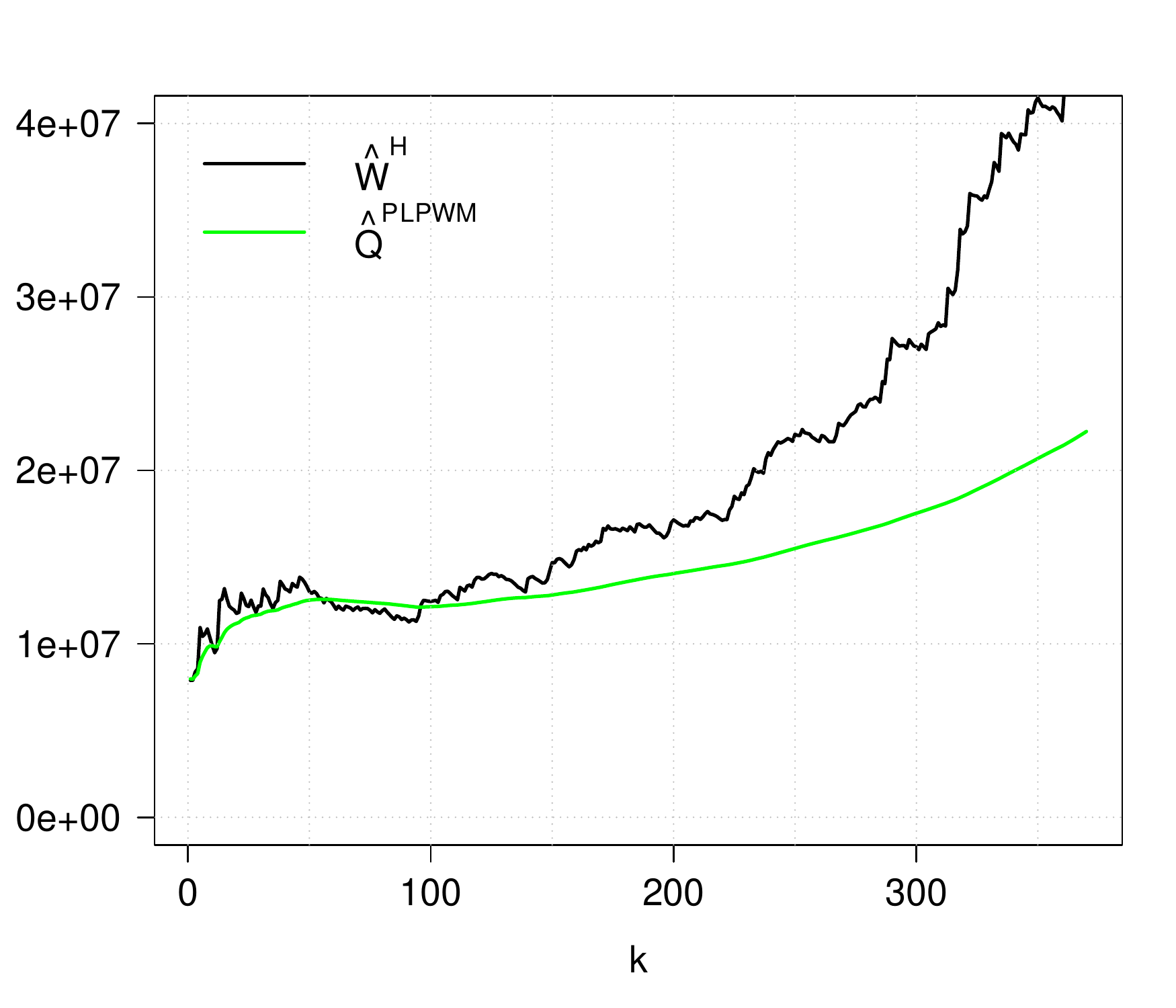}%
}%
%
%
\caption{
\textbf{Left:} Estimates of the EVI for the Secura
Belgian Re data; \textbf{Right:} Estimates of the quantile $q_p$ with $p = 0.001$ for the Secura Belgian Re data.
}
\label{caeirogomes_fig_secura} %
\end{figure}
In Figure \ref{caeirogomes_fig_secura}, we present, at the left, the EVI estimates provided by the Hill and PLPWM EVI-estimators in \eqref{caeirogomes_hill} and \eqref{caeirogomes_eq_PLPWM_EVI}, respectively. At the right we present the corresponding quantile estimates provided by Weissman-Hill and PLPWM estimators, in \eqref{caeirogomes_estimador_wh} and \eqref{caeirogomes_estimador_qPLPWM}, with $p = 0.001$. For a fair comparison of the PLPWM estimators with the equivalent classic estimators, the PLPWM estimators are now based on the top $k + 1$ largest o.s.'s. For this dataset, we have $\hat\rho=-0.756$ and $\hat\beta=0.803$, obtained at the level $k_1 = [n^{0.999}] = 368$ (Caeiro and Gomes, 2011c). Using these values, the estimates of the optimal level, given by \eqref{caeirogomes_k0}, are $\hat k_0^H=55$ and $\hat k_0^{PLPWM}=76$. Consequently, we have $\hat\gamma^{^{H}}_{55, 371}=0.291$ and $\hat\gamma^{^
{PLPWM}}_{76, 371}=0.286$. Finally, the quantile estimates are given by $\hat W^{H}_{55,371}(p)=12622248$ and $\hat Q^{PLPWM}_{76,371}(p)=12373324$.

%

\medskip%
\section{Some Overall Conclusions}
Based on the results here presented we can make the following comments:
\begin{itemize}
\item Regarding efficiency at optimal levels, the new PLPWM estimators are a valid alternative to the classic Hill, Weissman-Hill and PPWM estimators. And they are consistent for any $\gamma>0$, which does not happen for the PPWM estimators.

\item The analysis of the automobile claim amounts gave us the impression that the PLPWM EVI and extreme quantile estimators have a much smoother sample pattern than the Hill and the Weissman-Hill estimators.

\item It is also important to study the behaviour of the new PLPWM estimators for small sample sizes. That topic should be adressed in  future research work. \\
\end{itemize}

\medskip%

\noindent \textbf{acknowledgement:}\,
Research partially supported by 
FCT -- Funda\c{c}\~ao para a Ci\^{e}ncia e a Tecnologia, project 
PEst-OE/MAT/UI0297/2011 (CMA/UNL), EXTREMA, PTDC/MAT /101736/2008.

\medskip%

\end{sloppy}

\end{document}